\newcommand\blfootnote[1]{%
\begingroup
  \renewcommand\thefootnote{}\footnote{#1}%
  \addtocounter{footnote}{-1}%
\endgroup
}
\renewcommand{\le}{\leqslant}
\renewcommand{\leq}{\leqslant}
\newcommand{\be}[1]{\begin{equation}\label{#1}}
\newcommand{\ee}{\end{equation}}
\newtheorem{theorem}{Theorem}
\newcommand{\Cref}[1]{Co\-ro\-lla\-ry\,\ref{#1}}
\DeclareMathAlphabet{\mathbfsl}{OT1}{ppl}{b}{it} 
\begin{document}
\title{$\,$\\[-1.62ex] \textbf{
Partially Polarized Polar Codes:\\
A New Design for 6G Control Channels
\blfootnote{This paper has also been submitted to the \textit{IEEE International Conference on Communications (ICC) 2026}.}
\\[0.27ex]}}
\author{
  \begin{minipage}[t]{0.45\textwidth}
    \centering
    \textbf{Arman Fazeli}\\
    {\small Apple Inc.\\
    Cupertino, CA 95014, USA\\
    \texttt{afazeli@apple.com}}
  \end{minipage}\hfill
  \begin{minipage}[t]{0.45\textwidth}
    \centering
    \textbf{Mohammad M. Mansour}\\
    {\small Apple Inc.\\
    Cupertino, CA 95014, USA\\
    \texttt{m\_mansour@apple.com}}
  \end{minipage}\\[4.5em]
  \begin{minipage}[t]{0.45\textwidth}
    \centering
    \textbf{Ziyuan Zhu}\\
    {\small University of California San Diego\\
    La Jolla, CA 92093, USA\\
    \texttt{ziz050@ucsd.edu}}
  \end{minipage}\hfill
  \begin{minipage}[t]{0.45\textwidth}
    \centering
    \textbf{Louay Jalloul}\\
    {\small Apple Inc.\\
    Cupertino, CA 95014, USA\\
    \texttt{ljalloul@apple.com}}
  \end{minipage}
}

\maketitle
\begin{abstract}
\noindent
We introduce a new family of polar-like codes, called Partially Polarized Polar (PPP) codes. PPP codes are constructed from conventional polar codes by selectively pruning polarization kernels, thereby modifying the synthesized bit-channel capacities to ensure a guaranteed number of non-frozen bits available early in decoding. These early-access information bits enable more effective early termination, which is particularly valuable for blind decoding in downlink control channels, where user equipment (UE) must process multiple candidates, many of which carry no valid control information. Our results show that PPP codes offer substantial performance gains over conventional polar codes, particularly at larger block lengths where hardware limitations restrict straightforward scaling. Compared with existing methods such as aggregation or segmentation, PPP codes achieve higher efficiency without the need for additional hardware support. Finally, we propose several frozen-bitmap design strategies tailored to PPP codes.
\end{abstract}


\clearpage
\section{Introduction}
\label{sec:intro}
\noindent 
Polar codes, introduced by Arıkan in 2008~\cite{arikan2009channel}, were the first explicit family of error correction codes proven to achieve the capacity of symmetric binary input memoryless channels with efficient encoding and decoding algorithms. Since their invention, polar codes have attracted growing interest from both academia and industry, leading to the development of advanced decoding techniques such as successive cancellation list (SCL) decoding~\cite{tal2015list}, as well as numerous efficient hardware implementations~\cite{leroux2011hardware, leroux2013semiparallel, balatsoukas2014hardware,giard2015unrolled}. These advances have culminated in the adoption of polar codes in the 5G cellular standard~\cite{3gpp38212R15}, particularly for short to moderate block lengths.

In 5G New Radio (NR), downlink control information (DCI) is transmitted over the physical downlink control channel (PDCCH) and encoded using polar codes~\cite{3gpp38212R15}. Since the user equipment (UE) does not know in advance which control message is intended for it, blind decoding must be performed across multiple PDCCH candidates. Each candidate must be fully decoded and validated through a cyclic redundancy check (CRC) masked with the UE’s radio network temporary identifier (RNTI). This procedure is computationally demanding, as the decoder is required to process a large number of invalid candidates using full-length polar decoding, resulting in increased power consumption, latency, and hardware utilization~\cite{condo2017blind,jalali2020jdd,giard2017blind}. The challenge is particularly severe in low SNR conditions or when higher aggregation levels are used,re the number of blind decoding attempts increases substantially.

Looking ahead to sixth-generation (6G) systems, several studies and standardization trend reports indicate that the increasing complexity and diversity of control signaling (e.g., cross-cell or cross-carrier scheduling and richer resource configurations) will place additional pressure on DCI formats, likely requiring larger payload sizes~\cite{3gppWP}. This trend will further burden decoder hardware, especially under stringent low-latency and energy-efficiency constraints at the UE.

\begin{figure}[t!]
    \centering
    \includegraphics[width=\columnwidth,trim={0.5cm 7.0cm 13.5cm 1.0cm},clip]{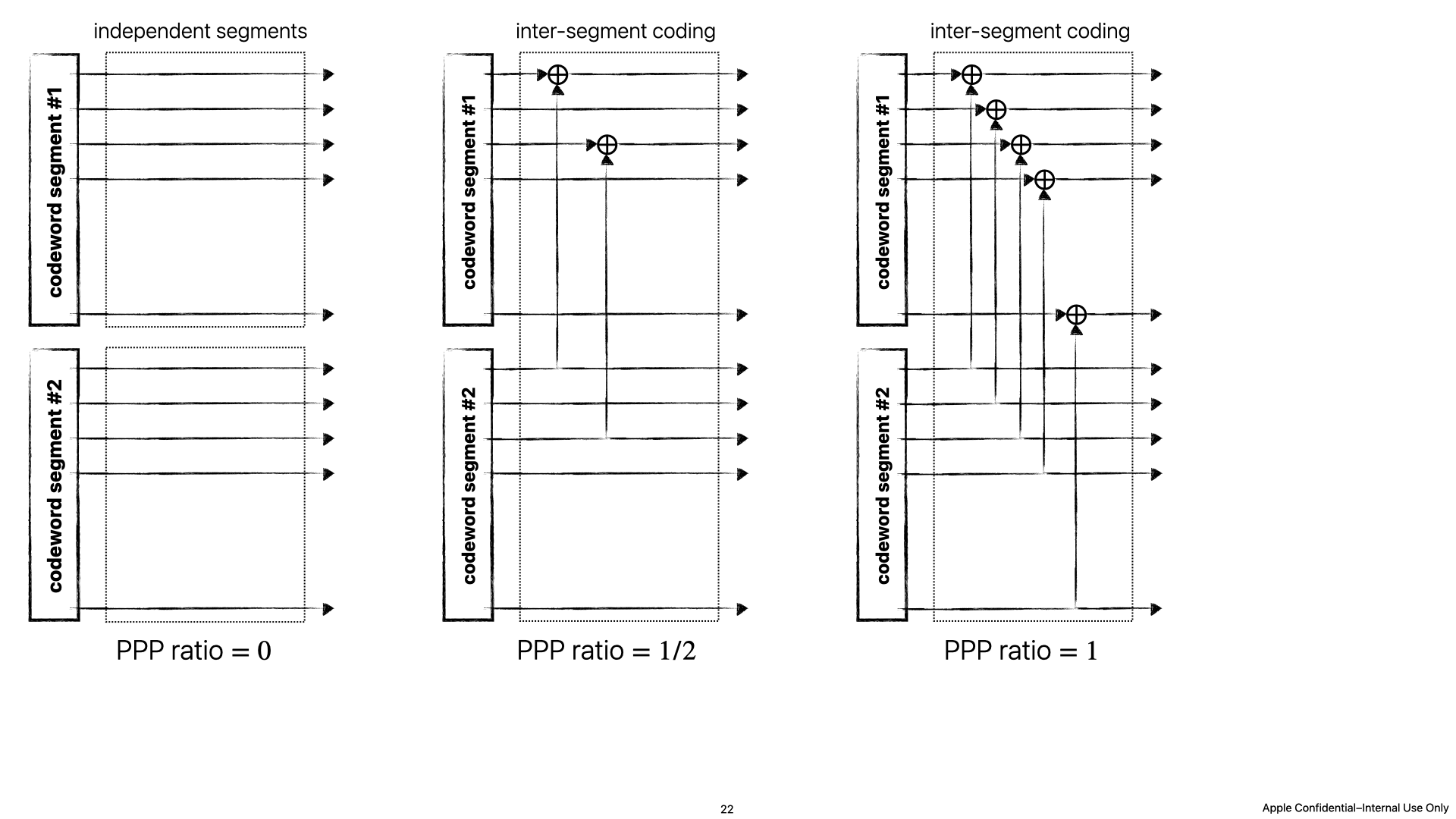}
    \caption{Illustration of inter-segment dependencies in PPP codes. The PPP ratio specifies the fraction of XOR operations retained in the final polarization layer: a ratio of $0$ removes all inter-segment dependencies, while a ratio of $1$ corresponds to a full polar transform, yielding maximum capacity imbalance.}\vspace{-2ex}
    \label{fig:intersegment}
\end{figure}

Assuming channel quality and noise levels remain constant, increasing the DCI payload requires a longer code to maintain the same overall code rate and error protection. However, commercial UE polar decoders already operate near practical hardware limits. This constraint is evident even in 5G NR, where more resource elements (REs) may be available than are used by the polar code, yet the code length is typically capped at $N=512$. Rather than extending code length—which would necessitate larger and more power-hungry hardware—the standard adopts repetition-based aggregation to enhance effective SNR under poor channel conditions. While aggregation improves reliability, it does not provide the coding gains that longer polar codes can deliver.

To prepare for longer DCI payloads, an alternative to increasing code length is to partition the payload into multiple segments, with each segment encoded, transmitted, and decoded separately. This method, known as codeword or DCI segmentation, offers hardware advantages: the same decoder can be reused across segments, potentially in parallel, thereby controlling additional decoding latency. However, similar to aggregation, segmentation sacrifices the coding gains of longer polar codes. Treating each segment independently prevents the system from exploiting polarization across a larger block length, reducing error-correction performance.

Even the loss of a single polarization layer can incur an SNR degradation on the order of $0.3$--$0.5\ \text{dB}$. This penalty is significant, particularly given that such an SNR improvement often requires doubling the list size in an SCL decoder. To compensate for the degradation introduced by segmentation, the UE would need to increase its list size accordingly, which nearly doubles both decoder area and power consumption. Thus, the supposed hardware savings of segmentation are effectively offset, undermining its appeal as a low-complexity solution.

\vspace{0.5em}
\noindent {\bf {Our contributions}} are threefold:

\begin{enumerate}
    \item \textbf{Inter-segment coding via partial polarization.} 
   We introduce a new inter-segment coding design based on partial polarization.
This additional coding layer restores much of the coding gain that would otherwise be lost with independent segments.
    
    \item \textbf{Integration with two-stage decoding.} 
    The proposed design aligns naturally with two-stage decoding, where an initial subset 
    of bits is decoded first and used to decide whether full decoding should continue. 
    This is particularly beneficial in blind decoding scenarios, where the UE attempts many invalid 
    candidates. Early termination based on the first-stage results can substantially reduce 
    unnecessary decoding. To ensure predictable hardware savings, the leading segment must 
    carry sufficient information (e.g., the RNTI sequence). Partial polarization enables such 
    flexibility by rebalancing the synthesized bit-channel capacities so that the leading 
    segment maintains a minimum code rate.
    
    \item \textbf{Partially Polarized Polar (PPP) codes.} 
    We introduce PPP codes, where each segment is itself a polar code. This structure allows efficient hardware reuse across segments while preserving inter-segment coding gain. We further propose strategies for frozen-bit map design and demonstrate that PPP codes outperform traditional segmentation and aggregation schemes.
\end{enumerate}

\vspace{0.5em}
\noindent {\bf {Prior work:}} Several approaches related to inter-segment coding, blind detection, and polar-coded modulation have been studied. Partially information-coupled (PIC) polar codes couple adjacent blocks by sharing systematic bits, improving transport block error rate relative to uncoupled polar codes~\cite{wu2018picpolar}. A blind detection scheme for polar codes based on two-stage decoding with UE-ID embedding and early stopping was proposed in~\cite{condo2018blinddetection}. The complexity–accuracy trade-off in blind detection was further analyzed in~\cite{giard2018tradeoff}, which introduced belief-propagation (BP)-based metrics for candidate pruning. Blind frame synchronization assisted by polar codes was studied in~\cite{feng2023blindfs}, leveraging soft information from frozen bits to integrate synchronization with SCL decoding. Polar coding has also been extended to modulation: Mahdavifar \textit{et al.} proposed a compound polar-coding method for bit-interleaved coded modulation (BICM) that achieves multichannel capacity~\cite{mahdavifar2016bicm}, while Mughal \textit{et al.} introduced multilevel polar constructions for cooperative spatial modulation with gains over Rayleigh and Rician channels~\cite{mughal2019ccsm}. A general framework for polar-coded modulation was developed in~\cite{seidl2013pcm}, unifying multilevel and BICM approaches with optimized labeling for improved performance.

\section{Capacity Re-balancing and Two-Stage Decoding}
\label{sec:capacity}
\noindent
We realize inter-segment coding by inserting a polarization-like layer, analogous to the final stage of a length-$N$ polar transform spanning all segments. In a standard polar code of length $N$, the last polarization layer applies $N/2$ XOR operations, where $N$ is the combined length of all segments. In contrast, a partial inter-segment polarization layer applies only $\tau N/2$ XOR operations, with $0 \le \tau \le 1$ denoting the \emph{partial polarization ratio}. Figure~\ref{fig:twoStageDecoder} illustrates the construction for two segments. When $\tau=0$, no XOR operations are performed and the segments remain independent. When $\tau=1$, all $N/2$ operations are applied, recovering the full polar transform and its maximum capacity polarization. Intermediate values (e.g., $\tau=1/2$) introduce dependencies for only a fraction of the coded bits in each segment, thereby re-balancing capacity across segments.

\begin{figure}[t!]
    \centering
    \includegraphics[width=\columnwidth,trim={0.5cm 3.0cm 8.5cm 1.0cm},clip]{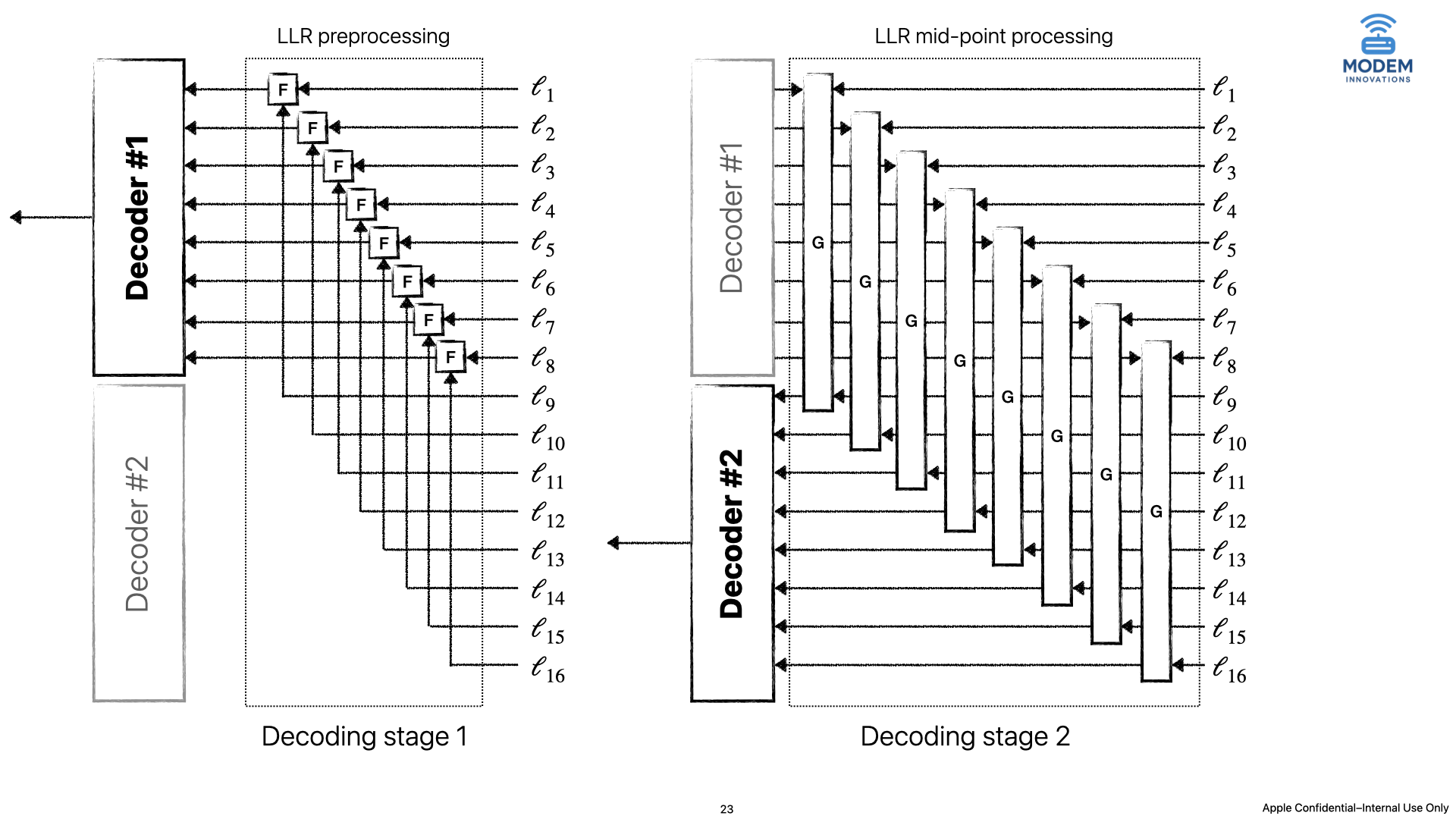}
    \caption{Two-stage decoding with partially polarized inter-segment coding. Channel LLRs are first pre-processed using $F$ operations to generate the input for the first decoder. After the first decoding stage, $G$ operations combine the results to construct the synthesized LLRs for the second decoder.}\vspace{-2ex}
    \label{fig:twoStageDecoder}
\end{figure}

Blind decoding at the UE is computationally demanding, as up to 44 PDCCH candidates may need to be decoded in 5G-NR, most of which do not carry valid DCI for the UE. Two-stage decoding reduces this complexity by enabling early termination after decoding only a fraction of the message. In 5G NR, the RNTI length is 16 bits. For two-stage decoding to provide meaningful early-termination opportunities, the first segment must carry a sufficient portion (often all) of the RNTI sequence, which imposes a minimum code-rate requirement on the leading segment.

Each XOR operation in the inter-segment coding layer corresponds to a polarization step 
that transforms two i.i.d. copies of a channel $W$ into a degraded synthesized channel $W^{-}$ and an improved synthesized channel $W^{+}$. Consider the two-segment design illustrated in Fig.~\ref{fig:twoStageDecoder}. With a full polarization layer (i.e., a conventional polar transform), the effective capacity allocated to the first segment is $\tfrac{N}{2} C(W^{-})$, which may fall below the required code rate. Partial polarization addresses this issue by increasing the effective capacity of the leading segment. Specifically, for a partial polarization ratio $\tau$, the effective capacity of the first and second segments respectively becomes
\begin{align}\label{eq:capacity}
\begin{split}
    C_{\text{first}} &= \tfrac{N}{2} \left( \tau C(W^{-}) + (1-\tau) C(W) \right), \\
    C_{\text{second}} &= \tfrac{N}{2} \left( \tau C(W^{+}) + (1-\tau) C(W) \right). 
\end{split}
\end{align}
Here, a $\tau$ fraction of pairs undergo polarization (contributing $C(W^{-})$ to the first and $C(W^{+})$ to the second segments), while a $(1-\tau)$ fraction remain unpolarized (contributing $C(W)$), thus re-balancing capacity toward a smaller gap.

As an example, consider $N = 128$ with an overall code rate corresponding to $24$ information bits and $16$ RNTI bits. Assume the underlying channel is a binary erasure channel (BEC) with erasure probability $z = 0.5$. Under a conventional polar transformation, the first segment receives $64$ synthesized $W^{-}$ channels, each with erasure probability
\begin{equation*}
    1 - (1-z)^2 = 0.75.
\end{equation*}
The resulting effective capacity of the leading segment is
\begin{equation*}
    C_{\text{first}} = 64 \times (1-0.75) = 16 \;\text{bits per channel use}.
\end{equation*}
This capacity is insufficient to reliably transmit $16$ RNTI bits, since the achievable 
communication rate must be strictly less than the channel capacity according to Shannon’s theorem. With a partial polarization layer of ratio $\tau = 0.5$, the effective capacity of the leading 
segment becomes
\begin{equation*}
    C_{\text{first}} = 64 \times \big( \tau (1-0.75) + (1-\tau)(1-0.5) \big) 
    =  24.
\end{equation*}
This value exceeds the design code rate, ensuring that the first segment can reliably 
accommodate the $16$ RNTI bits. 

\begin{figure}[t!]
    \centering
    \includegraphics[width=\columnwidth,trim={0.0cm 2.5cm 1.5cm 0.0cm},clip]{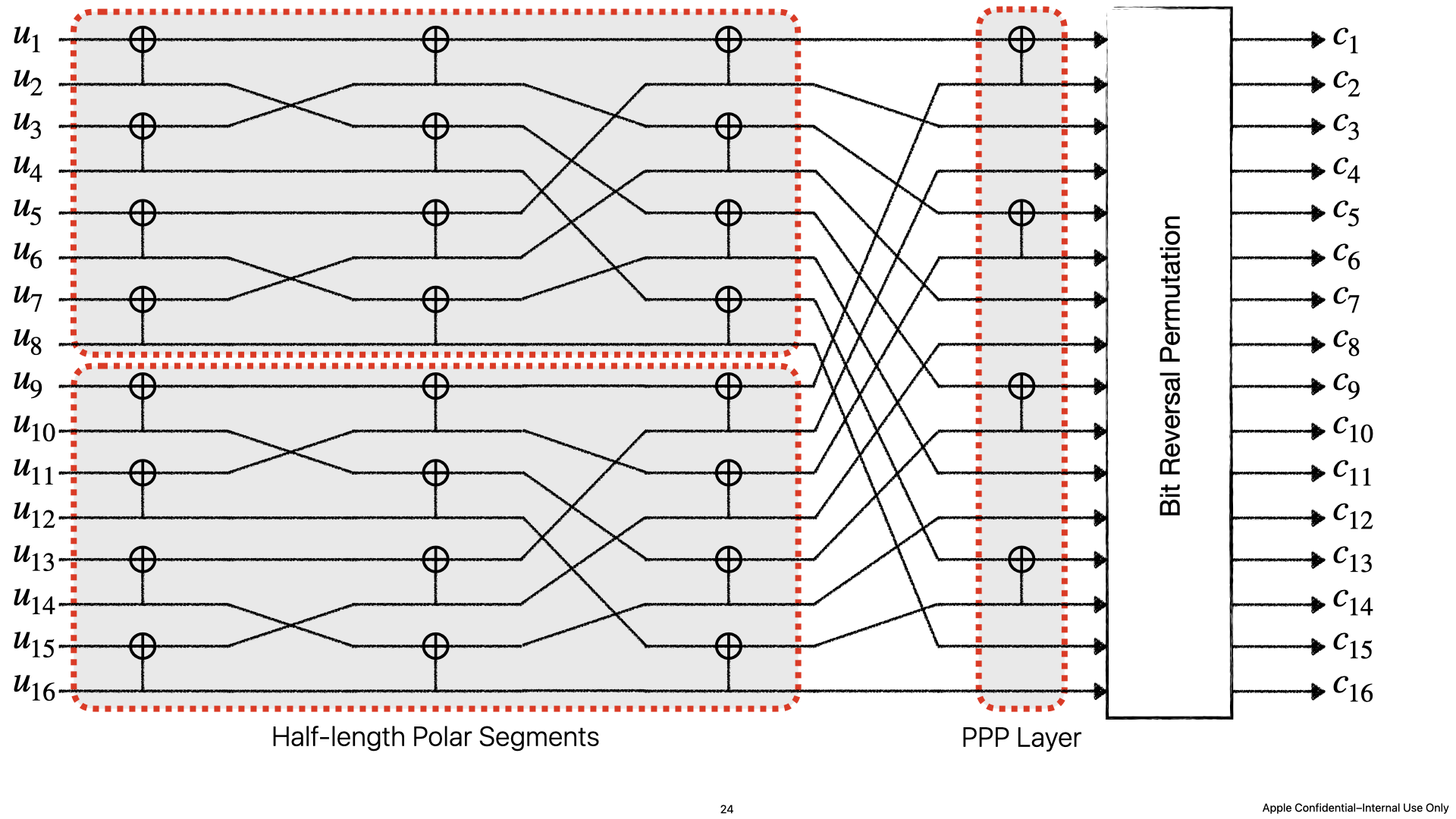}
    \caption{Example of a PPP code with partial ratio $\tau = 1/2$.}\vspace{-2ex}
    \label{fig:pppHW}
\end{figure}

Two-stage decoding with inter-segment partial polarization does not restrict the individual segment codes to be polar codes. In particular, if the leading segment is designed for an extremely low rate, it may be advantageous to replace it with a code of higher minimum distance, enabling fast ML-like decoding. Alternatively, the first stage can be simplified to a detection task: the UE only needs to determine whether its RNTI is the most likely estimate, and proceeds to the second decoding stage only if the answer is affirmative.

\section{Partially Polarized Polar Codes}
\label{sec:PPP}
\noindent 
In this section, we introduce Partially Polarized Polar (PPP) codes. A PPP code is formed by partitioning the payload into multiple segments, encoding each segment with a conventional polar code, and then applying a partial polarization layer across segments. This construction preserves the modularity of segmentation (enabling reuse of existing polar-decoder hardware) while introducing controlled inter-segment dependencies that recover much of the coding gain otherwise lost under independent segmentation.

The partial polarization ratio $\tau$ is a key design parameter for PPP codes. It is determined by factors such as block length, channel SNR, the minimum code rate required for the leading segment, and the target block error rate (BLER). Although the choice of $\tau$ may seem heuristic, it often suffices to select $\tau$ so that the effective capacity of the leading segment matches the design requirement. Concretely, one can simulate a length-$N/2$ polar code at the target code rate to identify the SNR corresponding to the desired BLER, and then choose $\tau$ to achieve an equivalent effective SNR by re-balancing capacities according to~\eqref{eq:capacity}.

\begin{figure}[t!]
    \centering
    \includegraphics[width=\columnwidth,trim={0.0cm 15.0cm 5.5cm 0.0cm},clip]{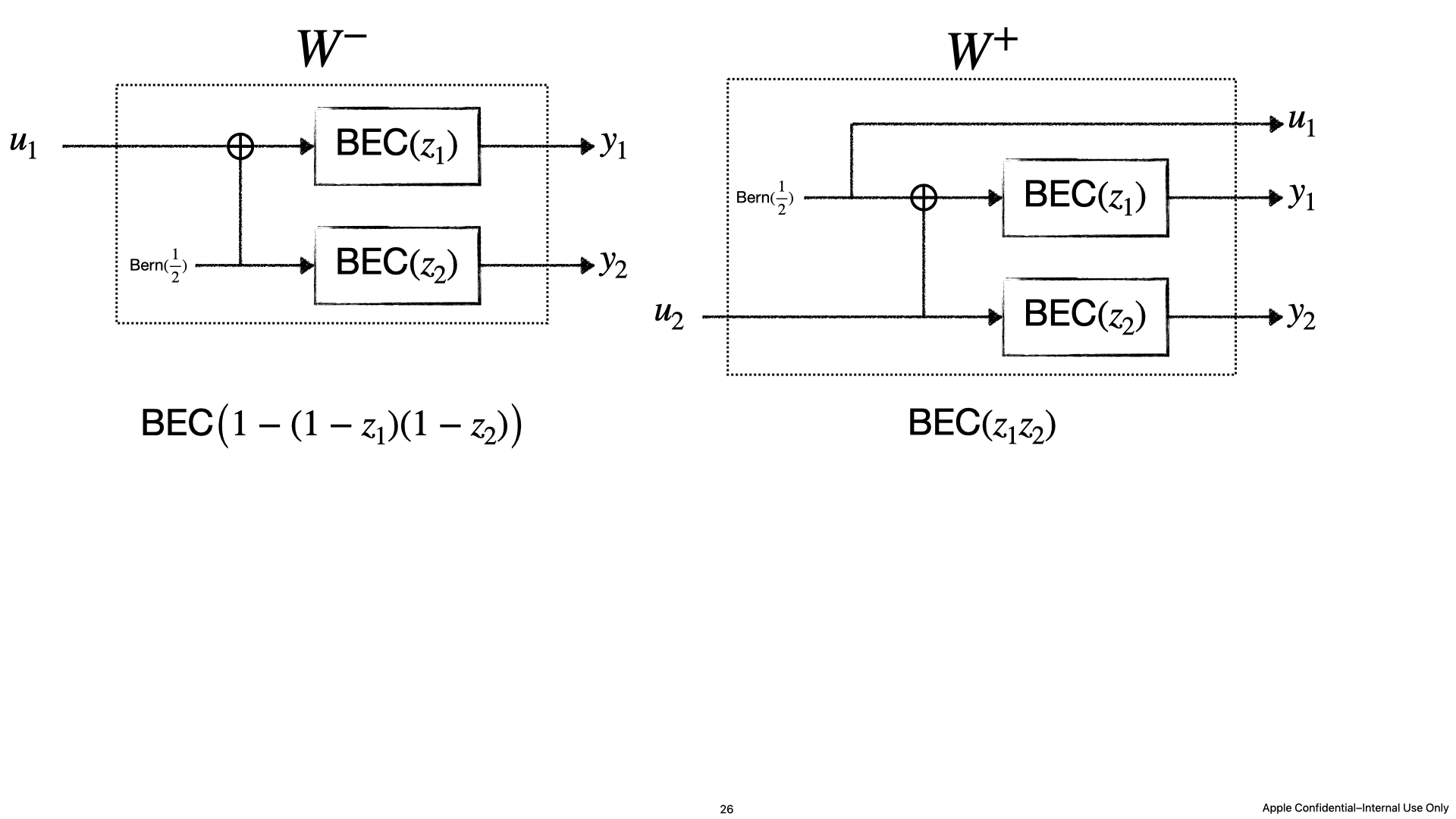}
    \caption{Synthesized erasure channels obtained by combining two non-identical erasure channels. The Bhattacharyya parameters of the erasure channels are equal to their erasure probabilities.}\vspace{-2ex}
    \label{fig:bec}
\end{figure}

The value of $\tau$ specifies the number of XOR operations retained in the partial polarization layer. A natural design question is which subset of the $N/2$ XOR operations from the full polarization layer should be kept to form the partial inter-segment coding. To address this, we track the polarization process and examine how partial polarization affects the evolution of the Bhattacharyya parameter in the simplified case of the binary erasure channel (BEC).

Recall that in the polarization process, combining two erasure channels produces two 
synthesized erasure channels. \linebreak Specifically, if the original channels have erasure probabilities $z_1$ and $z_2$, then the noisier synthesized channel, $W^-$, is
\begin{equation*}
    \text{BEC}\!\left( 1 - (1-z_1)(1-z_2) \right),    
\end{equation*}
while the less noisy synthesized channel, $W^+$, is
\begin{equation*}
    \text{BEC}(z_1 z_2).
\end{equation*}
For BECs, the corresponding Bhattacharyya parameters coincide with the erasure probabilities. These synthesized channels are illustrated in Fig.~\ref{fig:bec}.

The partial polarization layer effectively creates non-identical channels, which are then further combined through polarization within each segment. For example, in Fig.~\ref{fig:pppHW}, the synthesized input to the first segment consists of multiple copies of the underlying channel $W$ and multiple copies of $W^{-}$. These synthesized channels are subsequently combined through the remaining polarization stages. Polarizing non-identical channels is well understood: Mahdavifar \textit{et al.}~\cite{mahdavifar2013compound} proposed compound polar codes (as an alternative to segmentation) and used bit-interleavers to improve coding gains at higher modulation orders. Separately, Alsan and Telatar~\cite{alsan2016simple} showed that Arıkan's construction polarizes non-stationary memoryless channels and achieves channel capacity.

A key observation in~\cite{mahdavifar2013compound} is that, when combining non-identical channels, it is beneficial to reorder them so that, after a few polarization steps, the inputs to each kernel are identical (or nearly identical) in reliability. For $\tau = 1/2$ in Fig.~\ref{fig:pppHW}, this corresponds to alternating copies of $W$ and $W^{-}$, ensuring that after one polarization step the inputs to subsequent kernels are identical pairs. Numerical simulations indicate that similar reordering strategies are also effective for other partial polarization ratios. For instance, for $\tau = 1/4$, the partial polarization layer can be arranged so that, after the bit-reversal permutation, the synthesized input for the first segment appears as
\[
    W^{-},\ W,\ W,\ W,\ W^{-},\ W,\ W,\ W,\ \dots
\]
We leave the theoretical proof of the optimality of such constructions to future work. However, it is important to note that PPP codes with any fixed partial polarization ratio are also capacity-achieving in the same sense as conventional polar codes. This result is formalized in the theorem below.

\begin{theorem}\label{thm:capacity}
    For any fixed $\tau = \tfrac{\lambda}{\Lambda}$, where $\Lambda = 2^m$ for some 
    integer $m$ and $\lambda \in [0,\Lambda]$, PPP codes achieve the 
    symmetric channel capacity under successive cancellation (SC) decoding.
\end{theorem}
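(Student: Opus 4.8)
The plan is to reduce the statement to two established facts: the capacity conservation of a single polarization step, and the polarization of \emph{non-stationary} memoryless channels proved by Alsan and Telatar~\cite{alsan2016simple}. Throughout I fix $\tau=\lambda/\Lambda$ (so $m$ is fixed) and let the block length $N=2^n\to\infty$, writing $Z(\cdot)$ and $C(\cdot)$ for the Bhattacharyya parameter and symmetric capacity of a channel. For concreteness I argue the two-segment case of Fig.~\ref{fig:twoStageDecoder}; the multi-segment case only adds further partial layers and is analogous.

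The first step is to record that the partial inter-segment layer conserves total capacity. Each retained XOR is a step $(W,W)\mapsto(W^{-},W^{+})$, for which $C(W^{-})+C(W^{+})=2\,C(W)$, while each omitted pair contributes $2\,C(W)$ trivially. Hence, whatever subset of the $N/2$ XORs is retained, the $N$ synthesized channels presented to the segment encoders have total symmetric capacity $N\,C(W)$; grouping by segment recovers exactly the budgets $C_{\text{first}}$ and $C_{\text{second}}$ of~\eqref{eq:capacity}, whose sum is $N\,C(W)$.

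The second step is to analyze each segment in isolation. By construction the input to a segment is a sequence drawn from finitely many channel types ($W$ and $W^{-}$ for the leading segment, $W$ and $W^{+}$ for the trailing one), interleaved periodically with period $\Lambda=2^m$ so that a fraction $\tau$ are polarized. Feeding this periodic, non-identical sequence into the segment's conventional polar transform is precisely the setting of~\cite{alsan2016simple}: the synthesized bit-channels polarize, and the fraction whose Bhattacharyya parameter tends to $0$ converges to the arithmetic-mean capacity over one period, namely $\tau C(W^{-})+(1-\tau)C(W)=C_{\text{first}}/(N/2)$ for the first segment and $\tau C(W^{+})+(1-\tau)C(W)=C_{\text{second}}/(N/2)$ for the second. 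The hypothesis $\Lambda=2^m$ is exactly what makes this clean: a power-of-two period is commensurate with the dyadic recursion of the transform, so the coarse combining levels act on identical channels (ordinary \Arikan\ polarization) while a fixed size-$\Lambda$ kernel absorbs the distinct types within a period---this is also where the reordering and bit-reversal placement of the $W^{-}$ (resp.\ $W^{+}$) channels is pinned down.

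Summing the two contributions and using the conservation identity, the total number of near-noiseless bit-channels converges to $N\,C(W)$. Finally, since the PPP encoder is the composition of the two triangular per-segment polar transforms with the XOR layer, its generator is triangular under the bit order realized by the two-stage $F$/$G$ schedule of Fig.~\ref{fig:twoStageDecoder}; SC decoding therefore computes exactly the likelihoods of the synthesized bit-channels, and a union bound gives block error at most $\sum_{i\in\mathcal I}Z_i$ over any information set $\mathcal I$, where $Z_i$ is the Bhattacharyya parameter of the $i$-th synthesized bit-channel. Taking $\mathcal I$ to be the good bit-channels yields, for every $R<C(W)$ and all large $N$, a rate-$R$ PPP code with vanishing SC error probability. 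I expect the second step to be the main obstacle: making the non-stationary polarization of the periodically interleaved, non-identical channels rigorous, and in particular confirming that the good fraction equals the exact average capacity rather than merely some rebalanced value. The $\Lambda=2^m$ assumption and the reduction to a fixed finite kernel over one period are the levers for handling it.
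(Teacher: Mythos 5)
Your proof is correct in substance, but it rests on a different key lemma than the paper's. The paper never invokes non-stationary polarization: it absorbs the first $m$ layers of the transform outright. Since the retained-XOR pattern has period $\Lambda=2^m$ aligned with the dyadic recursion, after $m$ layers at most $2^m$ distinct synthesized channels $W_1,\dots,W_{2^m}$ appear, each exactly $N'=N/2^m$ times, and every remaining kernel acts on \emph{identical} copies; the PPP code therefore splits into $2^m$ independent \emph{conventional} polar codes of length $N'$ over i.i.d.\ channels $W_j$. Capacity conservation gives $I(W)=2^{-m}\sum_j I(W_j)$, each sub-code is operated at rate $R_j=R\,I(W_j)/I(W)$ by \Arikan's original theorem~\cite{arikan2009channel}, and the union bound $P_e(N)\le 2^m\epsilon_{N'}$ finishes the proof. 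This is precisely your parenthetical observation that a power-of-two period is ``commensurate with the dyadic recursion''---the paper promotes that side remark to the entire argument, which keeps the proof elementary, self-contained, and explains why the hypothesis $\Lambda=2^m$ appears in the statement at all. Your route, via Alsan--Telatar~\cite{alsan2016simple} applied segment by segment, buys genuine generality in exchange: it nowhere uses that $\Lambda$ is a power of two (only that the per-segment average capacity converges), so it would establish the result for arbitrary rational $\tau$, and it recovers the segment capacities~\eqref{eq:capacity} directly in the form relevant to the two-stage decoder. Two points you should still tighten. First, the union bound $\sum_{i\in\mathcal{I}}Z_i$ requires the \emph{rate} of polarization ($Z_i\le 2^{-N^\beta}$ on a capacity fraction of indices), not mere polarization; you must check that this speed result carries over to the non-stationary setting (it does, since the single-step Bhattacharyya bounds used in the standard bootstrapping argument are kernel-local, but this needs to be cited or argued). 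Second, your second-segment analysis is genie-aided---it conditions on correct decoding of the first segment---which is fine under the usual SC union-bound convention but should be stated explicitly, exactly as the paper's $2^m$-fold union bound implicitly does.
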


\begin{proof}
    The proof is conservative, since we disregard all polarization effects contributed by the 
first $m$ layers. A PPP code with $\tau = \lambda / 2^m$ is formed by retaining the first $\lambda$ XOR operations out of every $2^m$ in the inter-segment layer. After $m$ layers of polarization, up to $2^m$ different bit-channels are synthesized, and from that point onward, the inputs to every polarization kernel are identical.  

Let the underlying communication channel be denoted by $W$, and let the synthesized channels after $m$ stages be $$W_1, W_2, \dots, W_{2^m}$$ with corresponding symmetric capacities $$I(W_1), I(W_2), \dots, I(W_{2^m}).$$
Since the polarization process preserves capacity, we have
\begin{equation*}
        I(W) = \frac{1}{2^m} \sum_{j=1}^{2^m} I(W_j).
\end{equation*}
The PPP layer effectively decomposes the overall code of length $n$ into $2^m$ segments of conventional length-$N'$ polar codes, where the symmetric capacities of the synthesized bit-channels assigned to these segments are given by $I(W_j)$ for $j = 1, \dots, 2^m$ and $N' = N/2^m$. From this point onward, we can directly invoke the original polarization theorem of Arıkan~\cite{arikan2009channel}.  

Assume an overall code rate $R < I(W)$. Define the segment rates as
\begin{equation*}
    R_j = \frac{R}{I(W)} \cdot I(W_j), 
    \qquad j = 1, 2, \dots, 2^m.
\end{equation*}
By Arıkan’s theorem, each of these $2^m$ polar segments can be made to communicate reliably at rate $R_j$, with error probabilities that vanish as the block length increases. Let $\epsilon_{N'}$ denote the maximum block error probability under SC decoding among these length-$N'$ segments. Therefore, as $N'$ increases, we have $\epsilon_{N'} \rightarrow 0$. Consequently, the overall block error probability of the PPP code satisfies
\begin{equation*}
    P_e(N) \leq 2^m \, \epsilon_{N'},
\end{equation*}
which converges to zero as $N \to \infty$, since $m$ is fixed and independent of $N$.
\end{proof}

Although the proof establishes that PPP codes can achieve channel capacity by configuring the polar segments individually, a more effective construction is obtained by computing the reliability values of all synthesized bit-channels and selecting the best ones from the combined pool. This selection must also satisfy the rate constraints imposed by the two-stage decoding requirement. Note that PPP codes can be designed in full compliance with the 5G NR polar codes for each individual segment. For a given $(N,k)$ configuration, it suffices to provide additional instructions on how to partition the control payload into PPP segments.

Reliability values can be obtained through several methods: 1) approximating the channels with binary erasure channels (BECs) of equal capacity and computing the corresponding Bhattacharyya parameters~\cite{arikan2009channel}, 2) using the Tal--Vardy construction~\cite{tal2013construct}, 3) applying Gaussian approximation~\cite{trifonov2012efficient}, 4) performing density  evolution~\cite{mori2009performance}, or 5) conducting extensive Monte Carlo 
simulations~\cite{korada2009polar}. The only modification in PPP codes occurs in the first polarization layer, which is replaced by the partial polarization. Therefore, the 
construction algorithm must carefully track the specific channel combinations applied at each stage.

While these methods often yield highly optimized frozen bitmaps for successive cancellation (SC) decoding, they are less effective under successive cancellation list (SCL) decoding. This is because SCL decoding, particularly with large list sizes, approaches the performance of maximum likelihood (ML) decoding. In this case, the minimum distance of the code and its overall weight distribution play a more significant role than the bit-channel reliabilities computed under the assumption of a purely sequential decoder. To address this, the $\beta$-expansion method~\cite{he2017beta} has been proposed as a prominent approach for constructing reliability sequences tailored to SCL decoding.

For a given $j \in \{1,2,\dots,N=2^n\}$, let
    $(a_n,a_{n-1},\dots,a_1)_2$
denote the binary expansion of $j-1$. The $\beta$-expansion method assigns the following reliability metric to the $j$-th bit-channel in a conventional polar code:
\begin{equation*}
    m_j = \sum_{t=1}^n a_t \beta^t,
\end{equation*}
where $\beta \approx 2^{1/4}$. Bit-channels with larger metrics are considered more reliable. This aligns with prioritizing higher-weight rows in the polar transformation matrix, thereby improving the minimum distance properties of the overall code.

Note that the leading coefficient $a_n$ determines whether the bit-channel belongs to the first or the second segment. In a conventional polar code, the first polarization step has the greatest impact on the final reliability, and thus it is assigned the largest multiplier $\beta^n$. In PPP codes, however, this might not hold. For example, when $\tau = 0$, there is no inter-segment coding, and hence it is irrelevant which segment the bit-channel belongs to. Conversely, when $\tau = 1$, the PPP code reduces to a conventional polar code. Motivated by this observation, we propose the following modified metric, referred to as the 
$\alpha\beta$-expansion:
\[
    m_j = \alpha(\tau)a_n\beta^n + \sum_{t=1}^{n-1} a_t \beta^t,
\]
where $\alpha(\tau) \in [0,1]$ is a parameter dependent on the partial polarization ratio, satisfying
\[
    \alpha(\tau) = 0 \quad \text{for } \tau = 0, 
    \qquad
    \alpha(\tau) = 1 \quad \text{for } \tau = 1.
\]
A straightforward but non-optimal choice is $\alpha(\tau) = \tau$. More refined 
configurations may be obtained by exploiting partial order properties. We leave further 
optimization of this construction method to future work.

Recently, reinforcement learning--based sequence construction methods, particularly those using graph neural networks (GNNs)~\cite{liao2023scalable}, have been shown to produce highly effective reliability sequences. These approaches are more adaptable to realistic scenarios, such as when the underlying channel deviates from the simple AWGN model. Moreover, they can be trained to optimize code performance for specific code rates of practical interest, which is especially relevant to control channel standards, where polar codes are typically deployed at low code rates.  

We also developed a GNN-based reinforcement learning framework to generate reliability sequences for PPP codes. In addition to replacing the conventional polar generator matrix with the PPP graph, the reinforcement learning policy must be adapted to account for two-stage decoding. For instance, if an 8-bit CRC is placed on the upper half and a 16-bit CRC on the lower half, then effective CRC utilization requires preserving at least 16 unfrozen positions in the upper half and at least 32 in the lower half. Consequently, the learning agent halts freezing in the upper half once only 16 unfrozen positions remain, and similarly stops in the lower half when only 32 remain.

Figure~\ref{fig:pppCon} compares the performance of different PPP constructions for code length $N=64$ and payload size $K=32$. The first segment is protected with a 4-bit CRC, while the second segment uses an 8-bit CRC. The list size is set to $L=16$, which provides a performance close to maximum likelihood decoding. As shown, all three construction methods, namely Bhattacharyya tracking, $\alpha\beta$-expansion, and GNN-based construction, achieve desirable performance. In particular, PPP codes significantly outperform full segmentation and nearly close the gap to the larger polar code. The 5G polar curves are obtained by simulating the polar codes constructed based on the 5G NR polar sequence in~\cite{3gpp38212R15}.

For the GNN-based construction, the training configuration was as follows: the learning rate was $10^{-5}$, training was performed for \num{7000} episodes with a batch size of \num{32}, and the training SNR for each episode was uniformly sampled from the range $[2,3]$ dB. The trained sequence is presented in Table~\ref{table:n64seq}.  
\begin{table}[h]
\centering
\small
\begin{tabular}{ccccccccc}
1 & 2 & 3 & 5 & 9 & 17 & 4 & 33  & \dots \\
13 & 21 & 37 & 41 & 14 & 19 & 35 & 12 & \dots\\\
34 & 15 & 22 & 10 & 49 & 11 & 39 & 18 & \dots\\\
38 & 7 & 42 & 25 & 43 & 20 & 50 & 51 & 
\end{tabular}
\caption{The 32 least reliable bit-channels (in sorted order) for $N=64$, $L=16$, and $\tau=1/2$, as computed by the GNN.}
\label{table:n64seq}
\end{table}

\begin{figure}[t!]
    \centering
    \includegraphics[width=\columnwidth,trim={0.0cm 6.5cm 25.5cm 0.0cm},clip]{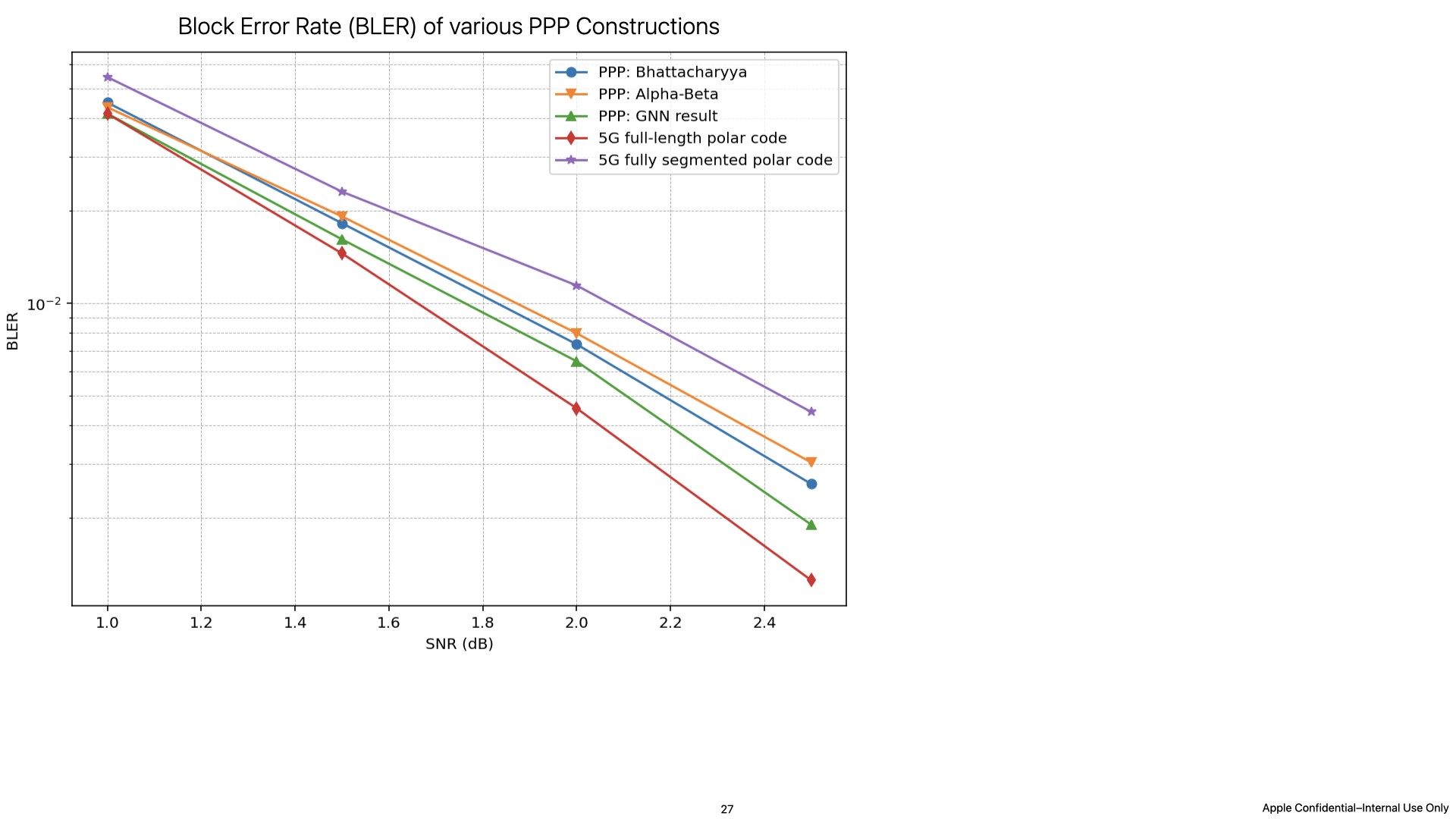}
    \caption{Performance comparison for $(N,K,L)=(64,32,16)$. 
    PPP significantly outperforms segmentation and approaches the performance of the larger polar code.}
    \label{fig:pppCon}\vspace{-1.5ex}
\end{figure}

\section{Numerical Results}
\label{sec:sim}
\noindent 
In this section, we present further numerical evidence of the superior performance of PPP codes compared to legacy solutions. Figure~\ref{fig:pppBLER} illustrates the smooth transition from fully segmented polar codewords, without any inter-segment coding, to a double-length polar code achieved through partial polarization. The simulation parameters are $N=1024$, $K=160$, and a 24-bit CRC. As the partial polarization ratio $\tau$ increases from 0 to 1, the BLER performance progressively converges to that of the longer polar code. At the same 
time, the effective code rate of the leading segment decreases. In this design example, we stop at the largest value of $\tau$ that still accommodates the 16 RNTI bits in the leading segment, which corresponds to $\tau = 7/8$.

Next, recall that legacy aggregation methods in polar coding are typically based on repetition schemes. While repetition improves the effective SNR, it does not provide additional coding gain. In contrast, PPP codes recover much of the coding gain that is lost under segmentation, without requiring extended hardware. This advantage is particularly valuable in blind decoding, since decoding of the second segment is often unnecessary. Figure~\ref{fig:pppAgg} highlights this effect: under the same configuration as before, a PPP code outperforms legacy repetition-based aggregation schemes, such as those currently employed in the 5G NR PDCCH.

\begin{figure}[t!]
    \centering
    \includegraphics[width=\columnwidth,trim={6.0cm 0.0cm 5.0cm 0.0cm},clip]{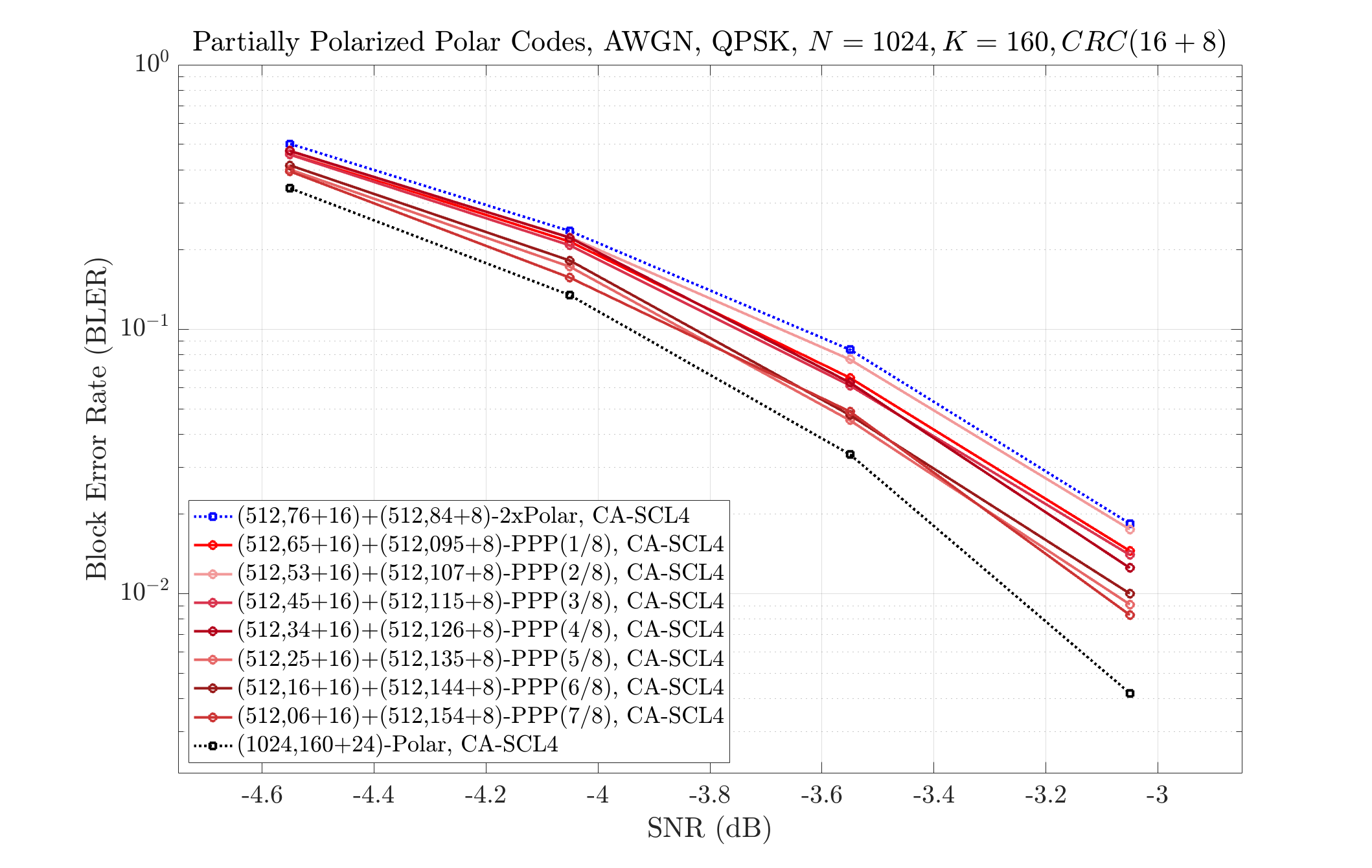}
    \caption{Performance of PPP codes for various values of $\tau$.}
    \label{fig:pppBLER}
\end{figure}

\begin{figure}[t]
    \centering
    \includegraphics[width=\columnwidth,trim={5.0cm 0.0cm 4.0cm 0.0cm},clip]{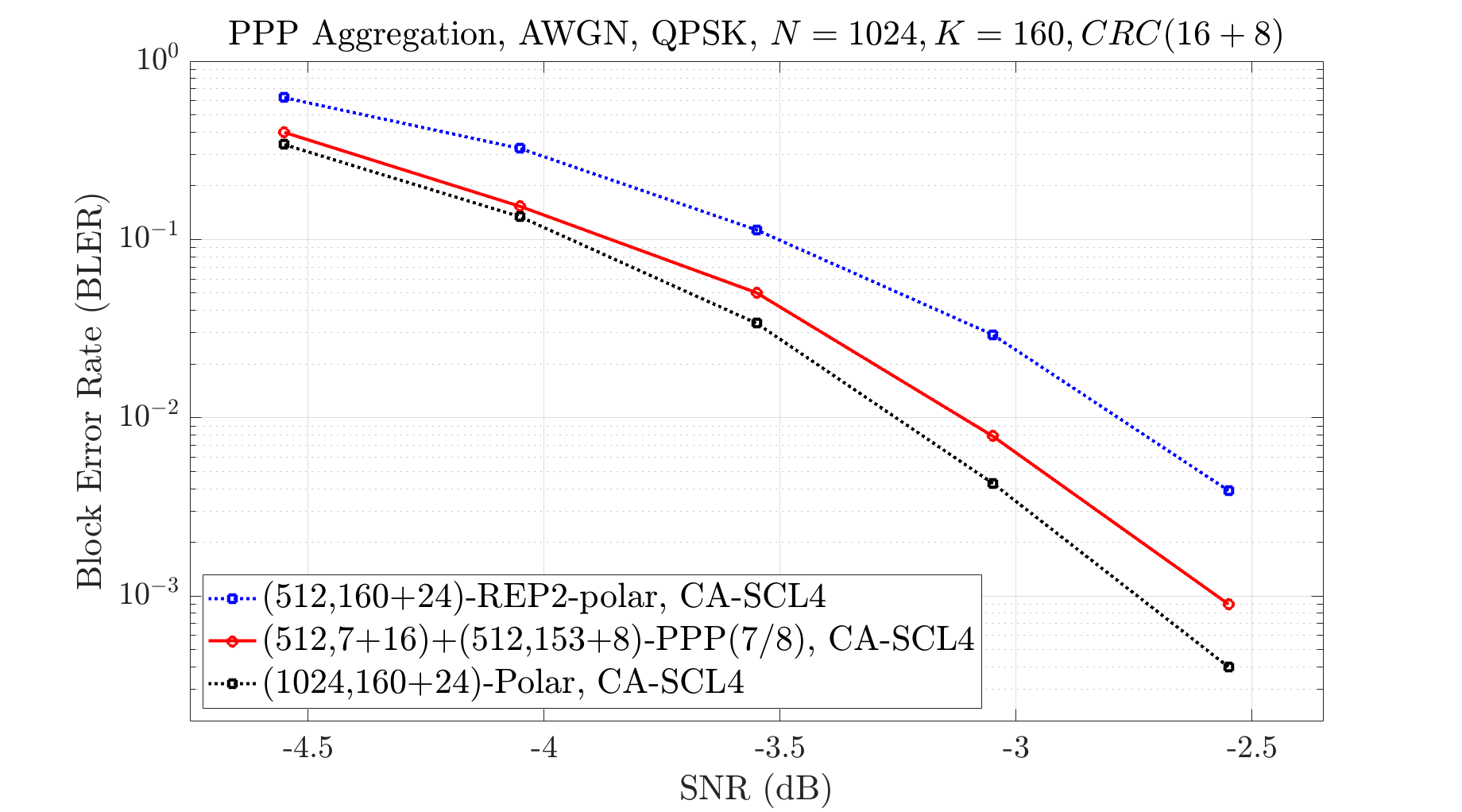}
    \caption{PPP codes outperform repetition-based aggregation schemes used in 5G NR PDCCH.}
        \label{fig:pppAgg}
\end{figure}
\section{Concluding Remarks}
\label{sec:conclusion}
\noindent
In this paper, we introduced partial polarization for inter-segment decoding as a means to enhance performance and enable early termination in two-stage decoding, making it a strong candidate for next-generation PDCCH blind decoding. Building on legacy polar codes, we proposed PPP codes, proved their capacity-achieving property, and presented multiple construction algorithms. Numerical results show that PPP codes significantly outperform both segmentation and aggregation, the two prevailing methods for handling larger payloads without added hardware complexity. In a way, PPP codes retain much of the coding gain that would otherwise be lost 
to legacy solutions. Furthermore, PPP codes and multi-stage decoding effectively decouple blind detection from DCI decoding, enabling the use of advanced algorithms such as dynamic SC-flip decoding for secondary segments.


\bibliographystyle{IEEEtran}
\bibliography{refs}

\end{document}